\newtheorem{theorem}{Theorem}
\newtheorem{corollary}{Corollary}
\newtheorem{lemma}{Lemma}
\theoremstyle{remark}
\theoremstyle{definition}
\title[Independence Covering in \texorpdfstring{$C_4$}{4-cycle}-Free Graphs]{Lower Bound for Independence Covering in \texorpdfstring{$C_4$}{4-cycle}-Free Graphs}
\author{Michael Kuhn$^{*}$}
\email{mkuhn@ucsb.edu}
\thanks{$^{*}$Thanks to the donors to the Science/Mathematics Summer Undergraduate Research Fellowship Fund at CCS for supporting my 2022 Summer Undergraduate Research Fellowship}
\author{Daniel Lokshtanov}
\email{daniello@ucsb.edu}
\author{Zachary Miller$^{\dagger}$}
\email{zmiller@ucsb.edu}
\thanks{$^{\dagger}$Thanks to the donors to the Computer Science Endowment at CCS for supporting my 2022 Summer Undergraduate Research Fellowship}
\address{University of California Santa Barbara, Santa Barbara CA 93106}
\date{August 29, 2023}
\begin{document}
    \begin{abstract}
        An \emph{independent set} in a graph $G$ is a set $S$ of pairwise non-adjacent vertices in $G$. A family $\mathcal{F}$ of independent sets in $G$ is called a $k$-\emph{independence covering family} if for every independent set $I$ in $G$ of size at most $k$, there exists an $S \in \mathcal{F}$ such that $I \subseteq S$.
        Lokshtanov et al. [ACM Transactions on Algorithms, 2018] showed that graphs of degeneracy $d$ admit $k$-independence covering families of size $\binom{k(d+1)}{k} \cdot 2^{o(kd)} \cdot \log n$, and used this result to design efficient parameterized algorithms for a number of problems, including \textsc{Stable Odd Cycle Transversal} and \textsc{Stable Multicut}. 

        In light of the results of Lokshtanov et al. it is quite natural to ask whether even more general families of graphs admit $k$-independence covering families of size $f(k)n^{O(1)}$. 
        Graphs that exclude a complete bipartite graph $K_{d+1,d+1}$ with $d+1$ vertices on both sides as a subgraph, called $K_{d+1,d+1}$-\emph{free graphs}, are a frequently considered generalization of $d$-degenerate graphs. 
%
        This motivates the question whether $K_{d,d}$-free graphs admit $k$-independence covering families of size $f(k,d)n^{O(1)}$. Our main result is a resounding ``no'' to this question -- specifically we prove that even $K_{2,2}$-free graphs (or equivalently $C_4$-free graphs) do not admit $k$-independence covering families of size  $f(k)n^{\frac{k}{4}-\epsilon}$. 
    \end{abstract}
    \maketitle
\newpage
    \section{Introduction}
An \emph{independent set} in a graph $G$ is a set $S$ of vertices in $G$ such that no two distinct vertices in $S$ are adjacent in $G$.
A family $\mathcal{F}$ of independent sets in $G$ is said to be a $k$-\emph{independence covering family} if, for every independent set $I$ in $G$ of size at most $k$, there exists an $S \in \mathcal{F}$ such that $I$ is a subset of $S$.
Every $n$-vertex graph $G$ has a $k$-independence covering family of size at most $n^k$, namely the family of all independent sets of size at most $k$.
Lokshtanov et al.~\cite{ind-cov-fam} observed that many graphs have substantially smaller $k$-independence covering families, and that $k$-independence covering families of sufficiently small size (in particular of size $f(k)n^{O(1)}$) are very useful for designing parameterized algorithms for certain problems, including \textsc{Stable $s$-$t$-Separator}, \textsc{Stable Odd Cycle Transversal} and \textsc{Stable Multicut} (we refer to Loksthtanov et al.~\cite{ind-cov-fam} for the definitions of these problems).

Lokshtanov et al. gave constructions of $k$-independence covering families of size $\binom{k(d+1)}{k} \cdot 2^{o(kd)} \cdot \log n$ for every $d$-degenerate graph (a graph $G$ is $d$-\emph{degenerate} if every induced subgraph of $G$, including $G$ itself, has a vertex of degree at most $d$), and of size $f(k)n^{\epsilon}$ for every $\epsilon > 0$ and every nowhere-dense family of graphs (the definition of nowhere-dense families of graphs is not relevant for this article, so we omit it).
These constructions have later been used to design parameterized algorithms for a number of different graph problems~\cite{AgrawalHM22,AgrawalJKS20,FellowsGMMRRSS23,JacobMR21,JainKM20}. 

A limiting factor for the applicability of $k$-independence covering families is that the constructions of Lokshtanov et al.~\cite{ind-cov-fam} do not work for all graphs, but only for $d$-degenerate and nowhere dense families of graphs. 
One might ask whether this limitation is necessary; could it be that every $n$-vertex graph $G$ has a $k$-independence covering family of size $f(k)n^{O(1)}$?
This question has a simple negative answer -- in the disjoint union of $k$ complete graphs on $n/k$ vertices, any $k$-independence covering family must have size at least $(n/k)^k$.

This motivates the research question of this paper: what is the most general family of graphs in which every graph has a $k$-independence covering family of size $f(k)n^{O(1)}$?
A tempting target is the class of $K_{d,d}$-\emph{free} graphs. For every integer $d \geq 1$, $K_{d,d}$ is the complete bipartite graph with $d$ vertices on both sides (i.e, with vertices $v_1, \ldots, v_{2d}$ and edge set $\{v_iv_j ~:~ i \leq d < j\}$). A graph $G$ is $K_{d,d}$-\emph{free} if one cannot obtain $K_{d,d}$ from $G$ by deleting vertices and edges. 

The reason why studying $k$-independence covering of $K_{d,d}$-free graphs is particularly appealing is because both $d$-degenerate graphs and nowhere dense families of graphs exclude a $K_{d',d'}$ for some $d'$. Therefore a $k$-independence covering family of size $f(k)n^{O(1)}$ for $K_{d,d}$-free graphs would yield a common generalization of the two constructions of Lokshtanov et al.~\cite{ind-cov-fam} for $d$-degenerate and nowhere dense families of graphs. 
Our main result is that, unfortunately, for every function $f$ there exist $K_{d,d}$-free graphs that do not admit $k$-independence covering families of size at most $f(k)n^{O(1)}$; this holds for any choice of $d \ge 2$. In fact, we show that a well known example of bipartite $K_{2,2}$-free graphs (the $K_{2,2}$ is usually referred to as $C_4$, the cycle on $4$ vertices), namely the point-line incidence graphs of projective planes, do not admit $k$-independence covering families of size at most  $f(k)n^{\frac{k}{4}-\epsilon}$. While the example graph is well known, the proof of the lower bound requires some effort and the application of a fairly recent isoperimetric inequality for such graphs~\cite{vertex-iso}.
On the positive side we show that $C_4$-free graphs do admit $k$-independence covering families of size at most  $k^{k+O(1)} \cdot n^{k/2 + o(k)}$. This bound easily follows from the fact that $C_4$-free graphs are $\lceil\sqrt{n}\rceil$-degenerate together with the construction of $k$-independence covering families of Lokshtanov et al~\cite{ind-cov-fam}.
    
\section{Lower Bound for Independence Covering in \texorpdfstring{$C_4$}{4-cycle}-free graphs}\label{nex}    
All graphs considered in this work are simple and undirected. We denote by $V(G)$ and $E(G)$ the set of vertices and edges of $G$, respectively. The \emph{neighborhood} of a vertex $v$ in a graph $G$ is the set $N(v) = \{u \in V(G) ~:~ uv \in E(G)\}$. For a vertex set $S$ the set $N(S)$ is defined as $N(S) = \left(\bigcup_{u \in S} N(u)\right) - S$. A graph $G$ is said to be \emph{bipartite} if there exists a partition $V_1$, $V_2$ of $V(G)$ such that every edge of $G$ has one endpoint in $V_1$ and the other in $V_2$. The sets $V_1$ and $V_2$ are called the \emph{sides} or \emph{bipartitions} of $G$.

For every prime $q$ we construct a graph $\Gamma_{q}$. The graph can be succintly described as the Levi graph (i.e point-line incidence graph) of a projective plane of order $q$ (see Section 3 of~\cite{proj-plane} for a construction of projective planes of order $q$). 
In order to keep the presentation self contained we give a full description of the graph here, and neither the construction of $\Gamma_{q}$ nor our lower bound on the size of independence covering families require the reader to know projective planes. 

\smallskip
\noindent
{\bf The graph $\Gamma_q$.} For a prime $q$, let $\mathbb{Z}_q$ be the finite field of order $q$, i.e. the set of integers $\{0, \ldots, q-1\}$ equipped with the $+$ and $\cdot$ operation modulo $q$. The set $P$ has the following $q^2 + q + 1$ vertices:
\emph{(i)} a vertex $p_{x,y}$ for every  pair $(x, y) \in \mathbb{Z}_q \times \mathbb{Z}_q$,
\emph{(ii)} a vertex $\hat{p}_a$ for every element $a \in \mathbb{Z}_q$, and
\emph{(iii)} a vertex $\tilde{p}$.
The set $L$ has the following $q^2 + q + 1$ vertices:
\emph{(i)} a vertex $\ell_{a,b}$ for every  pair $(a, b) \in \mathbb{Z}_q \times \mathbb{Z}_q$,
\emph{(ii)} a vertex $\hat{\ell}_x$ for every element $x \in \mathbb{Z}_q$, and
\emph{(iii)} a vertex $\tilde{\ell}$.
We describe the set of edges of $\Gamma_{q}$ by describing for every vertex of $P$, the set of its adjacent vertices in $L$.
\emph{(i)} For each $(x, y) \in \mathbb{Z}_q \times \mathbb{Z}_q$ the vertex $p_{x,y}$ is adjacent to all vertices $\ell_{a,b}$ such that $ax + b \equiv y \mod q$, as well as to $\hat{\ell}_x$.
\emph{(ii)} For each $a \in \mathbb{Z}_q$ the vertex $\hat{p}_a$ is adjacent to all vertices $\ell_{a,b}$ (for all $b \in \mathbb{Z}_q$), as well as to $\tilde{\ell}$.
\emph{(iii)} The vertex $\tilde{p}$ is adjacent to $\hat{\ell}_x$ for every element $x \in \mathbb{Z}_q$, as well as to $\tilde{\ell}$. This concludes the construction of $\Gamma_q$.

A way to visualize the graph $\Gamma_q$ is to draw the vertices $p_{x,y}$ in the Cartesian plane in the coordinates $(x,y)$. Each vertex $\ell_{a,b}$ corresponds to a line with slope $a$ which passes through the $y$-axis in the point $(0, b)$ and ``wraps around'' modulo $q$.
The vertex $p_{x,y}$ is adjacent to $\ell_{a,b}$ if the point lies on the corresponding line.
For each $a \in \mathbb{Z}_q$ the vertex  $\hat{p}_a$ corresponds to the ``point out in infinity'' where all the lines with slope $a$ meet. 
For each $x \in \mathbb{Z}_q$ the vertex  $\hat{\ell}_x$ corresponds to the vertical line that pass through the $x$-axis in $(x, 0)$.
The vertex $\tilde{p}$ is the ``point out in infinity'' where all the vertical lines  $\hat{\ell}_x$ meet, and $\tilde{\ell}$ is the ``line'' that passes through all points out in infinity.  The visual for $\Gamma_2$ is shown in \autoref{fig:fano}.
\begin{figure}
    \centering
    \includegraphics[width=0.5\textheight]{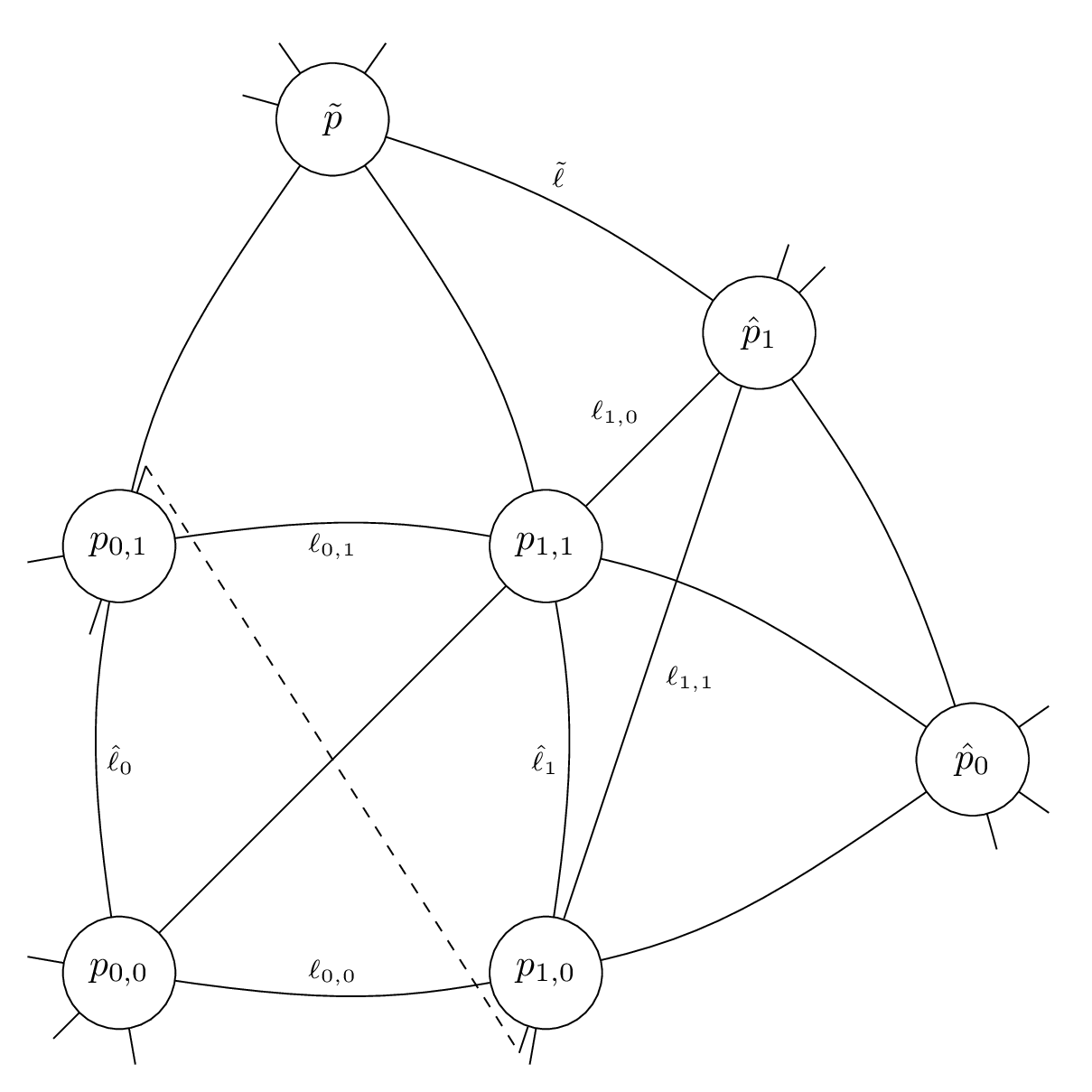}
    \caption{The \textit{fano plane}, $\Gamma_2$.}
    \label{fig:fano}
\end{figure}

We summarize without proof a few properties of the graph $\Gamma_q$. These properties can be verified directly from the definition of $\Gamma_q$.
\begin{lemma}\label{lem:properties}
For every prime $q$ the graph $\Gamma_q$ satisfies the following properties. 
\begin{enumerate}
\item $\Gamma_q$ has $n = 2(q^2 + q + 1)$ vertices. 
\item For every $p \in P$, $|N(p)| = q + 1$.
\item\label{itm:oneCommon} For every pair of distinct vertices $p_1, p_2$ in $P$, we have $|N(p_1) \cap N(p_2)| = 1$.
\item For every $\ell \in L$, $|N(\ell)| = q + 1$.
\item For every pair of distinct vertices $\ell_1, \ell_2$ in $L$, we have $|N(\ell_1) \cap N(\ell_2)| = 1$.
\end{enumerate}
\end{lemma}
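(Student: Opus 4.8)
The plan is to verify each of the five items directly from the construction, treating (1), (2) and (4) as routine counting and concentrating the real work on the ``two points lie on a unique line'' statements (3) and (5).

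For item (1) I would simply count by vertex type: the side $P$ contains $q^2$ vertices of the form $p_{x,y}$, $q$ vertices of the form $\hat p_a$, and the single vertex $\tilde p$, for a total of $q^2+q+1$; the identical bookkeeping gives $|L| = q^2+q+1$, hence $n = 2(q^2+q+1)$. For the degree bounds (2) and (4) I would go through the three vertex types on each side. For a finite point $p_{x,y}$, the condition $ax+b\equiv y \pmod q$ determines $b$ uniquely once $a$ is chosen, so $p_{x,y}$ sees exactly $q$ lines $\ell_{a,b}$, and adding $\hat\ell_x$ gives degree $q+1$; the vertices $\hat p_a$ and $\tilde p$ are handled by reading off their adjacency lists, each of which consists of $q$ vertices of one type plus a single ``infinite'' vertex. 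Item (4) is symmetric, using that each line $\ell_{a,b}$ meets each vertical fiber $x$ in the single point $(x,ax+b)$ and additionally touches $\hat p_a$, while $\hat\ell_x$ and $\tilde\ell$ behave like $\hat p$ and $\tilde p$.

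The heart of the lemma is item (3). Here I would fix two distinct $p_1,p_2 \in P$ and split into cases according to their types. The main case is two finite points $p_{x_1,y_1}, p_{x_2,y_2}$: if $x_1\neq x_2$, then a common neighbor $\ell_{a,b}$ must satisfy $ax_1+b=y_1$ and $ax_2+b=y_2$, and subtracting forces $a=(y_1-y_2)(x_1-x_2)^{-1}$ and then a unique $b$ — this is exactly where I use that $\mathbb{Z}_q$ is a field, so $x_1-x_2\neq 0$ is invertible — while no $\hat\ell_\bullet$ can be common since $\hat\ell_{x_1}\neq\hat\ell_{x_2}$; if instead $x_1=x_2$ but $y_1\neq y_2$, no line $\ell_{a,b}$ passes through both and the unique common neighbor is the vertical line $\hat\ell_{x_1}$. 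The remaining cases — a finite point with an $\hat p_a$, a finite point with $\tilde p$, two vertices $\hat p_{a_1},\hat p_{a_2}$, and an $\hat p_a$ with $\tilde p$ — each reduce to a one-line computation isolating a single explicit common neighbor (an ordinary line $\ell_{a,b}$, a vertical line $\hat\ell_x$, or the line $\tilde\ell$, depending on the case). Item (5) is then the mirror-image case analysis on $L$; alternatively one could exhibit the point–line duality automorphism and invoke (3), but I would give the direct version to keep the argument self-contained.

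The step I expect to be most delicate is the case bookkeeping in (3) and (5): one must check that across different vertex types no \emph{second} common neighbor sneaks in — for instance, confirming that $\tilde\ell$ is adjacent to both $\hat p_{a_1}$ and $\hat p_{a_2}$ while no ordinary line $\ell_{a,b}$ is simultaneously adjacent to two distinct $\hat p$'s (since $\ell_{a,b}$ meets only $\hat p_a$). Organizing the casework so the finite/infinite incidence rules are applied uniformly, and double-checking every pair of types, is the only real obstacle; the underlying arithmetic is a single invertibility fact in $\mathbb{Z}_q$.
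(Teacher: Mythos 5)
Your proposal is correct and complete: the paper states this lemma explicitly without proof, remarking only that the properties ``can be verified directly from the definition of $\Gamma_q$,'' and your case-by-case verification — routine counting for items (1), (2), (4), and the type-based case analysis with the single invertibility argument in $\mathbb{Z}_q$ for items (3) and (5) — is precisely that direct verification, carried out carefully (including the check that no second common neighbor arises across vertex types). Nothing is missing, and the approach matches the one the paper intends.
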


In the rest of this section we will prove that for every function $f : \mathbb{N} \rightarrow \mathbb{N}$, integer $k$ and $\epsilon > 0$, when $q$ (and therefore $n$) is sufficiently large compared to $k$ then $\Gamma_q$ does not have a $k$-independence covering family of size  $f(k)n^{\frac{k}{4}-\epsilon}$.  The fact that $\Gamma_q$ is bipartite with sides $P$ and $L$, as well as the properties listed in \autoref{lem:properties} are the only properties of $\Gamma_q$ that we will use in this proof. The fact that $\Gamma_q$ is $C_4$-free follows immediately from Property~\ref{itm:oneCommon} of \autoref{lem:properties}.

We say that a graph $G$ is a \textit{$(\eta,\Delta,\lambda)$-graph} if $G$ is a bipartite graph with bipartition $(V_1,V_2)$ with $|V_1|=|V_2| = \eta$, every vertex in $G$ has degree exactly $\Delta$, and for every $i \in \{1,2\}$ and every pair of distinct vertices in $V_i(G)$ have exactly $\lambda$ common neighbors. \autoref{lem:properties} states precisely that $\Gamma_q$ is a $(q^2+q+1,q + 1,1)$-graph.
A key ingredient of our proof is the following Lemma by Price et al.~\cite{vertex-iso}, which states that in $(\eta,\Delta,\lambda)$-graphs where $\lambda$ is small compared to $\Delta$, every set $S$ of size up to about $\Delta/\lambda$ has a large neighborhood (relative to the size of $S$). 

\begin{lemma}[\cite{vertex-iso}]\label{lem:vklam}
            For every $(\eta,\Delta,\lambda)$-graph $G$,  and a nonempty subset $S$ of $V_1(G)$ or $V_2(G)$ it holds that
            \begin{equation*} \label{lower0}
                \frac{|N(S)|}{|S|} \ge \frac{\Delta^2}{\Delta + \lambda (|S| - 1)}\mbox{.}
            \end{equation*}
\end{lemma}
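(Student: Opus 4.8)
The plan is to prove the inequality by a double-counting argument combined with the Cauchy--Schwarz inequality, applied to the degrees into $S$ of the vertices in $N(S)$. Fix a nonempty $S \subseteq V_1(G)$ (the case $S \subseteq V_2(G)$ is symmetric), and write $s = |S|$ and $T = N(S)$, so that $T \subseteq V_2(G)$ and each vertex of $T$ has at least one neighbor in $S$. For each $w \in T$ let $d_S(w) = |N(w) \cap S|$ denote the number of neighbors of $w$ inside $S$. The quantity I want to bound from below is $|T|$, and the strategy is to control $|T|$ through the first and second moments of the numbers $d_S(w)$.

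First I would compute the first moment. Since every vertex of $S$ has degree exactly $\Delta$ and all of its neighbors lie in $T$, counting the edges between $S$ and $T$ from both endpoints gives $\sum_{w \in T} d_S(w) = s\Delta$.

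Next I would compute the second moment by counting, in two ways, the configurations consisting of an ordered pair $(u, u')$ of distinct vertices of $S$ together with a common neighbor. Counting from the side of $S$, each of the $s(s-1)$ ordered pairs of distinct vertices in $V_1(G)$ has exactly $\lambda$ common neighbors by hypothesis, and those neighbors automatically lie in $T$; this gives $\lambda s(s-1)$ configurations. Counting from the side of $T$, a vertex $w$ contributes $d_S(w)(d_S(w)-1)$ configurations. Equating yields $\sum_{w \in T} d_S(w)(d_S(w)-1) = \lambda s(s-1)$, and adding the first moment gives $\sum_{w \in T} d_S(w)^2 = \lambda s(s-1) + s\Delta$.

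Finally, I would apply Cauchy--Schwarz to the $|T|$ terms $d_S(w)$, obtaining $(s\Delta)^2 = \left(\sum_{w \in T} d_S(w)\right)^2 \le |T| \cdot \sum_{w \in T} d_S(w)^2 = |T| \cdot \bigl(\lambda s(s-1) + s\Delta\bigr)$. Solving for $|T|$ and dividing by $s$ yields exactly $|N(S)|/|S| \ge \Delta^2 / (\Delta + \lambda(s-1))$. I do not expect a genuine obstacle here; the only points requiring care are that the hypothesis supplies the common-neighbor count for \emph{all} pairs on the same side (so the second-moment identity is an exact equality rather than merely an inequality), and that $T = N(S)$ consists precisely of the vertices with $d_S(w) \ge 1$, so that summing over $T$ rather than over all of $V_2(G)$ is what produces the factor $|T|$ in the Cauchy--Schwarz step.
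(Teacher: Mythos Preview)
Your proof is correct. Note, however, that the paper does not actually prove this lemma: it is quoted verbatim from Price et al.~\cite{vertex-iso} and used as a black box, so there is no ``paper's own proof'' to compare against. Your double-counting plus Cauchy--Schwarz argument is the standard way to establish this kind of expansion bound for biregular designs, and it goes through cleanly here; the two points you flag (that the $\lambda$-common-neighbor hypothesis makes the second-moment count an equality, and that summing over $T=N(S)$ rather than all of $V_2$ is what places $|T|$ on the right-hand side of Cauchy--Schwarz) are exactly the places where one could slip, and you handle both correctly.
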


        Applying \autoref{lem:vklam} to $\Gamma_q$ immediately yields the following corollary.
        
        \begin{corollary}\label{cor:expansion}
            In $\Gamma_{q}$, every nonempty subset $S$ of $P$ or $L$ satisfies
            \begin{equation}\label{lower2}
                |N(S)| \ge \frac{(q+1)^2|S|}{q + |S|}.
            \end{equation}
        \end{corollary}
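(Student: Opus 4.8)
The plan is to derive Corollary~\ref{cor:expansion} as a direct substitution into the inequality of \autoref{lem:vklam}. By \autoref{lem:properties}, the graph $\Gamma_q$ is a $(q^2+q+1, q+1, 1)$-graph, so I would set $\eta = q^2+q+1$, $\Delta = q+1$, and $\lambda = 1$. Since $S$ is a nonempty subset of $P = V_1(\Gamma_q)$ or $L = V_2(\Gamma_q)$, the hypotheses of \autoref{lem:vklam} are met, and the lemma applies verbatim.

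Substituting these parameter values into the right-hand side of \autoref{lem:vklam} gives
\begin{equation*}
    \frac{|N(S)|}{|S|} \ge \frac{(q+1)^2}{(q+1) + 1 \cdot (|S| - 1)} = \frac{(q+1)^2}{q + |S|}\mbox{,}
\end{equation*}
where the denominator simplifies because $(q+1) + (|S|-1) = q + |S|$. Multiplying both sides by $|S|$ (which is positive since $S$ is nonempty) yields exactly \eqref{lower2}, namely $|N(S)| \ge \frac{(q+1)^2|S|}{q+|S|}$.

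There is essentially no obstacle here: the corollary is a routine specialization of the more general isoperimetric lemma to the concrete parameters of $\Gamma_q$. The only point requiring a moment's care is confirming that $\lambda = 1$ is the correct common-neighbor count on \emph{both} sides of the bipartition, which is precisely what properties~(3) and~(5) of \autoref{lem:properties} assert, and that the arithmetic simplification of the denominator is performed correctly. I would state these two verifications explicitly and conclude.
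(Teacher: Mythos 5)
Your proposal is correct and follows exactly the paper's own proof: specialize \autoref{lem:vklam} with $\Delta = q+1$ and $\lambda = 1$ (justified by \autoref{lem:properties}), simplify the denominator $(q+1)+(|S|-1) = q+|S|$, and multiply through by $|S|$. No differences worth noting.
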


        \begin{proof}
            From \autoref{lem:vklam} we have that
            \begin{align*}\label{lower1}
                \frac{|N(S)|}{|S|} \ge \frac{(q+1)^2}{(q+1) + (|S| - 1)} = \frac{(q+1)^2}{q + |S|}.
            \end{align*}
            Multiplying on both sides by $|S|$ proves the statement of the corollary.
        \end{proof}

        The main idea of our construction is to use \autoref{cor:expansion} in order to show that no independent set in $\Gamma_q$ can have many vertices both in $P$ and in $L$. This in turn shows that no independent set in $\Gamma_q$ can contain many independent subsets with $k/2$ vertices in $P$ and $k/2$ vertices in $L$. For sufficiently small $k$, a ``large fraction'' of sets of $k/2$ vertices in $P$ and $k/2$ vertices in $L$ form an independent set, and a simple counting argument then yields the lower bound on the size of $k$-independence covering families. We start by showing that no independent set in $\Gamma_q$ can have many vertices both in $P$ and in $L$.

        \begin{lemma}\label{lem:productBound}
        Let $I$ be an independent set in $\Gamma_q$, $a = |I \cap P|$ and $b = |I \cap L|$. Then $ab \leq q(q+1)^2 < 2n^{3/2}$.
        \end{lemma}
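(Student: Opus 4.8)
The plan is to exploit that $\Gamma_q$ is bipartite with sides $P$ and $L$, together with the expansion bound of \autoref{cor:expansion}. Write $A = I \cap P$ and $B = I \cap L$, so that $|A| = a$ and $|B| = b$. If $a = 0$ or $b = 0$ then $ab = 0$ and the claim is trivial, so assume both sets are nonempty. Since every edge of $\Gamma_q$ runs between $P$ and $L$, the independence of $I$ is \emph{equivalent} to the statement that there is no edge between $A$ and $B$. In particular every neighbor of a vertex of $A$ lies outside $B$, so $N(A)$ is disjoint from $B$. As both $N(A)$ and $B$ are subsets of $L$, this forces $|N(A)| + b \le |L| = q^2 + q + 1$. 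This reduction of ``independence'' to ``disjoint neighborhoods inside one side'' is the only real idea in the proof; everything afterwards is bookkeeping.

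Next I would feed the lower bound on $|N(A)|$ from \autoref{cor:expansion} (applied to the nonempty set $A \subseteq P$) into this inequality, obtaining $\frac{(q+1)^2 a}{q+a} + b \le q^2 + q + 1$. Using the identity $q^2 + q + 1 = (q+1)^2 - q$ and solving for $b$ yields, after combining the two $(q+1)^2$ terms over the common denominator $q+a$, the bound $b \le \frac{q\,(q^2 + q + 1 - a)}{q+a}$. Multiplying through by $a$, it now suffices to verify $\frac{q\,a\,(q^2+q+1-a)}{q+a} \le q(q+1)^2$, i.e. $a(q^2+q+1-a) \le (q+1)^2(q+a)$. Expanding both sides and cancelling, and noting that $(q^2+q+1) - (q+1)^2 = -q$, this collapses to $-(a^2 + aq) \le q(q+1)^2$, which holds trivially because the left-hand side is nonpositive. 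Hence $ab \le q(q+1)^2$.

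For the final numeric inequality I would simply observe that $n = 2(q^2 + q + 1) > 2q^2$, so $2n^{3/2} > 2\cdot(2q^2)^{3/2} = 2^{5/2}q^3$, while $q(q+1)^2 = q^3 + 2q^2 + q \le 4q^3$ for all $q \ge 1$; since $4 < 2^{5/2}$, we get $q(q+1)^2 < 2n^{3/2}$. The one point worth flagging is that although solving for $b$ and multiplying by $a$ looks as though it could produce a genuinely delicate quadratic inequality in $a$, the leading $(q+1)^2$ contributions cancel against $q^2+q+1$ and the residual inequality is trivially true; so the apparent ``obstacle'' in the algebra is illusory, and the substance of the lemma is entirely in the bipartite/expansion reduction of the first paragraph.
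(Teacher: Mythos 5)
Your proof is correct and follows essentially the same route as the paper's: both use bipartiteness to get $b \le |L| - |N(I \cap P)|$, then plug in the expansion bound of \autoref{cor:expansion} and finish with routine algebra. The only (harmless) differences are that you handle the empty-set cases explicitly and verify the resulting inequality by direct expansion, where the paper simply drops a negative term and bounds $\frac{q(q+1)^2}{q+a} \le \frac{q(q+1)^2}{a}$.
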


        \begin{proof}
        Let $S = I \cap P$ and $T = I \cap L$. Then $a = |S|$, and we have that 
                \begin{align*}
                    b &\le |V_2(\Gamma_q)| - |N(S)| & \mbox{since } I \mbox{ is independent}\\
                    &\le ((q + 1)^2 - q) - \frac{(q + 1)^2a}{q + a} & \mbox{by \autoref{cor:expansion}}\\
                    &= \frac{q(q + 1)^2 - q(q + a)}{q + a} \\
                    &\le \frac{q(q + 1)^2}{a}.
                \end{align*}
        Multiplying both sides by $a$ yields $ab \leq q(q+1)^2 = q(n + q)$. Now $q < \sqrt{n} < n$, hence the bound can be simplified to $q(n+q) < \sqrt{n}(n + q) < 2n^{3/2}$, proving the statement of the lemma.
        \end{proof}

        \begin{corollary}\label{cor:coverBound}
        Let $I$ be an independent set in $\Gamma_q$ and $k$ be an even integer. Then there are at most $2^{k/2}n^{3k/4}$ distinct vertex sets $Z \subseteq I$ such that $|Z \cap P| = |Z \cap L| = k/2$.
        \end{corollary}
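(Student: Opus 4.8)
The plan is to turn the count into a product of two binomial coefficients and then feed the product bound from \autoref{lem:productBound} into it. Write $a = |I \cap P|$ and $b = |I \cap L|$. Since $\Gamma_q$ is bipartite with sides $P$ and $L$, every $Z \subseteq I$ splits uniquely as $(Z \cap P) \cup (Z \cap L)$, and the conditions $|Z \cap P| = |Z \cap L| = k/2$ simply say that $Z \cap P$ is a $(k/2)$-element subset of $I \cap P$ and $Z \cap L$ is a $(k/2)$-element subset of $I \cap L$, chosen independently. Hence the number of such $Z$ is exactly $\binom{a}{k/2}\binom{b}{k/2}$.

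Next I would apply the elementary estimate $\binom{m}{t} \le m^t$, valid for all nonnegative integers $m, t$ (and giving $0$ when $t > m$, so the bound is never violated even if $a$ or $b$ is smaller than $k/2$). This yields
\begin{equation*}
\binom{a}{k/2}\binom{b}{k/2} \le a^{k/2} b^{k/2} = (ab)^{k/2}.
\end{equation*}
Finally I would invoke \autoref{lem:productBound}, which gives $ab \le 2n^{3/2}$. Because $k$ is even, $k/2$ is a nonnegative integer, so raising the inequality to the power $k/2$ preserves it and gives $(ab)^{k/2} \le (2n^{3/2})^{k/2} = 2^{k/2} n^{3k/4}$, which is the claimed bound.

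I do not expect any real obstacle here: the entire content sits in \autoref{lem:productBound}, and everything else is an exact count followed by a crude binomial estimate. The only points worth stating explicitly are that the hypothesis of $k$ being even makes $k/2$ an integer (so the binomial coefficients and the exponent are well defined), and that the degenerate cases $a < k/2$ or $b < k/2$ are handled automatically, since the corresponding binomial coefficient is then $0$ and the inequality holds trivially.
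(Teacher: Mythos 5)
Your proof is correct and follows essentially the same route as the paper: count the sets $Z$ as $\binom{a}{k/2}\binom{b}{k/2}$, bound this by $(ab)^{k/2}$, and apply \autoref{lem:productBound}. The only cosmetic difference is that the paper handles the case $\min(a,b) < k/2$ by a separate (trivial) case distinction, whereas you absorb it into the convention $\binom{m}{t}=0$ for $t>m$; both are fine.
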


        \begin{proof}
        Let $a = |I \cap P|$ and $b = |I \cap L|$. If $\min(a,b) < k/2$ there are no choices for $Z$ and so the statement of the corollary follows. If $\min(a,b) \geq k/2$ there are 
        $$\binom{a}{k/2}\binom{b}{k/2} \leq (ab)^{\frac{k}{2}} < (2n^{3/2})^{\frac{k}{2}}$$
        choices for $Z$, and the statement follows. Here the last inequality follows from \autoref{lem:productBound}.
        \end{proof}

        \autoref{cor:coverBound} implies that in a $k$-independence covering family $\mathcal{F}$ of $\Gamma_q$ each set $I \in \mathcal{F}$ contains no more than $2^{k/2}n^{3k/4}$ independent sets that have $k/2$ vertices both in $P$ and in $L$. We now show that the number of such independent sets in $\Gamma_q$ is much larger, in particular quite close to $n^k$. This immediately implies a lower bound on $|\mathcal{F}|$ of about $n^{k/4}$.

        \begin{lemma}\label{lem:ifp}
            Let $k \leq q$ be an even integer. The number of independent sets $Z$ in $\Gamma_{q}$ such that $|Z \cap P| = |Z \cap L| = k/2$ is at least $\left(\frac{n}{4k}\right)^k$.
        \end{lemma}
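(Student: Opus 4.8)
The plan is to exploit that $\Gamma_q$ is bipartite with sides $P$ and $L$, so that a set $Z$ with $S := Z\cap P$ and $T := Z\cap L$ is independent if and only if there are no edges between $S$ and $T$, equivalently $T \cap N(S) = \emptyset$ (the sets $P$ and $L$ are each independent, so only cross edges can spoil independence). Since every such $Z$ is recovered uniquely from the pair $(S,T)$, it suffices to count pairs $(S,T)$ with $S\subseteq P$, $T \subseteq L\setminus N(S)$ and $|S|=|T|=k/2$; there is no danger of double counting, and every such pair produces a distinct independent $Z$ of the desired type.

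First I would fix $S \subseteq P$ with $|S| = k/2$ and bound the number of admissible $T$ from below. Here, in contrast to the upper-bound arguments, I would not invoke \autoref{cor:expansion} but only the crude degree bound: by property (2) of \autoref{lem:properties} every vertex of $P$ has exactly $q+1$ neighbors, so a union bound over $S$ gives $|N(S)| \le \frac{k}{2}(q+1)$. Hence the number of choices for $T$ is at least $\binom{|L| - \frac{k}{2}(q+1)}{k/2}$, a quantity that does not depend on which $S$ we fixed.

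Next I would simplify the base of the resulting power. Using $|L| = n/2 = q^2+q+1$ together with the hypothesis $k \le q$, a short computation shows $\frac{n}{2} - \frac{k}{2}(q+1) \ge \frac{n}{4}$, so the inner binomial is at least $\binom{n/4}{k/2}$. Summing over the $\binom{|P|}{k/2}$ choices of $S$ and applying the elementary estimate $\binom{m}{r} \ge (m/r)^r$ to both factors yields a product of the form $(n/k)^{k/2}\,(n/2k)^{k/2}$, each factor of which is at least $n/4k$ raised to the appropriate power, so the whole product exceeds $(n/4k)^k$.

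The computations are routine; the only places demanding a little care are the arithmetic verifying $\frac{n}{2} - \frac{k}{2}(q+1) \ge \frac{n}{4}$ (this is precisely where the assumption $k \le q$ is consumed, and where one also checks $n/4 \ge k/2$ so that $\binom{n/4}{k/2}$ is well behaved) and confirming that the final product of lower bounds is genuinely at least $(n/4k)^k$. The one conceptual point worth flagging is that, unlike the matching upper bound in \autoref{cor:coverBound}, this counting lower bound needs nothing more than the bipartite structure and the common degree $q+1$ — the isoperimetric inequality of Price et al.\ plays no role here.
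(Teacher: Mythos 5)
Your proposal is correct and follows essentially the same approach as the paper: fix $Z \cap P$ (with $\binom{n/2}{k/2}$ choices), use the degree bound $|N(Z\cap P)| \le \frac{k}{2}(q+1) < \frac{n}{4}$ to get at least $\binom{n/4}{k/2}$ choices for $Z \cap L$, and multiply. The only (immaterial) difference is in the final arithmetic, where you apply $\binom{m}{r} \ge (m/r)^r$ to both factors while the paper bounds the product by $\binom{n/4}{k/2}^2$ and expands; both are routine and valid.
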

        \begin{proof}
            There are $\binom{|P|}{k/2} = \binom{n/2}{k/2}$ choices for $Z \cap P$. Since every vertex in $\Gamma_q$ has degree $q+1$ and $k \leq q$ we have that every choice of $Z \cap P$ satisfies
            $$|N(Z \cap P)| \leq (q+1)\frac{k}{2} \leq \frac{q^2+q}{2} < \frac{n}{4}\mbox{.}$$
            Hence, for each choice of $Z \cap P$ there are at least $\binom{n/4}{k/2}$ choices for $Z \cap L$. Therefore there are at least $\binom{n/2}{k/2}\binom{n/4}{k/2}$ distinct choices for $Z$. Now,
            
            $$\binom{n/2}{k/2}\binom{n/4}{k/2} > \binom{n/4}{k/2}^2 > \left(\frac{(n/4 - k/2)^{\frac{k}{2}}}{(k/2)^{\frac{k}{2}}}\right)^2 = \left(\frac{n/2 - k}{k}\right)^k > \left(\frac{n}{4k}\right)^k\mbox{.}$$
            Here the last transition holds because $k \leq q < n/4$.

        \end{proof}
        
        We are now ready to lower bound the size of $k$-independence covering families of $\Gamma_q$.

        \begin{theorem}\label{thm}
            Let $k \leq q$ be an even integer and $\mathcal{F}$ be a $k$-independence covering family of $\Gamma_q$. Then $|\mathcal{F}| \geq \frac{n^{k/4}}{(4\sqrt{2}k)^k}$.
        \end{theorem}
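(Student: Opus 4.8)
The plan is a direct double-counting argument that combines the two bounds established above. The key observation is that any vertex set $Z \subseteq V(\Gamma_q)$ with $|Z \cap P| = |Z \cap L| = k/2$ has exactly $k$ vertices, so whenever such a $Z$ is independent, the defining property of a $k$-independence covering family guarantees the existence of some $S \in \mathcal{F}$ with $Z \subseteq S$. Accordingly, I would count the pairs $(Z,S)$ in which $Z$ is an independent set with $k/2$ vertices on each side and $S \in \mathcal{F}$ satisfies $Z \subseteq S$, and I would estimate this count from below and from above.

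First I would lower bound the number of such pairs. By \autoref{lem:ifp}, the number of independent sets $Z$ with $|Z \cap P| = |Z \cap L| = k/2$ is at least $\left(\frac{n}{4k}\right)^k$, and since each such $Z$ is covered by at least one member of $\mathcal{F}$, each participates in at least one pair. Hence the total number of pairs is at least $\left(\frac{n}{4k}\right)^k$.

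Next I would upper bound the same quantity by summing over the members of $\mathcal{F}$. Each $S \in \mathcal{F}$ is itself an independent set, so \autoref{cor:coverBound} (applied with $I = S$) bounds the number of vertex sets $Z \subseteq S$ with $|Z \cap P| = |Z \cap L| = k/2$ by $2^{k/2}n^{3k/4}$. Summing over all of $\mathcal{F}$, the total number of pairs is at most $|\mathcal{F}| \cdot 2^{k/2}n^{3k/4}$.

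Combining the two estimates gives $|\mathcal{F}| \cdot 2^{k/2}n^{3k/4} \ge \left(\frac{n}{4k}\right)^k$, and rearranging yields $|\mathcal{F}| \ge \frac{n^{k/4}}{2^{5k/2}k^k}$. The final step is the elementary identity $2^{5k/2}k^k = (4\sqrt{2}\,k)^k$, which follows from $4\sqrt{2} = 2^{5/2}$ and produces the stated bound. I do not expect any genuine obstacle here; the argument is purely a matter of assembling \autoref{lem:ifp} and \autoref{cor:coverBound} through the double count, and the only point requiring care is the bookkeeping in the closing algebraic simplification.
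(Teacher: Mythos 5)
Your proposal is correct and follows essentially the same argument as the paper: both combine \autoref{lem:ifp} and \autoref{cor:coverBound} by noting that every independent set $Z$ with $k/2$ vertices on each side must lie in some member of $\mathcal{F}$, while each member contains at most $2^{k/2}n^{3k/4}$ such sets, and then divide. Your explicit double count of pairs $(Z,S)$ is just a slightly more formal phrasing of the paper's counting step, and your closing algebra $2^{5k/2}k^k = (4\sqrt{2}\,k)^k$ matches the stated bound.
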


        \begin{proof}
        By \autoref{lem:ifp} there are at least $\left(\frac{n}{4k}\right)^k$ independent sets $Z$ of size $k$ in $\Gamma_q$ such that $|Z \cap P| = |Z \cap L| = k/2$. For each such $Z$ there exists an $I \in \mathcal{F}$ such that $Z \subseteq I$. However, by \autoref{cor:coverBound} there are no more than $2^{k/2}n^{3k/4}$ distinct vertex sets $Z \subseteq I$ such that $|Z \cap P| = |Z \cap L| = k/2$. Thus, 
        $$|\mathcal{F}| \geq \frac{\left(\frac{n}{4k}\right)^k}{2^{k/2}n^{3k/4}} = \frac{n^{k/4}}{(4\sqrt{2}k)^k}\mbox{.}$$
        \end{proof}
      
        Since $k$ can be chosen arbitrarily small compared to $q$ (and therefore $n$) in the statement of \autoref{thm}, this rules out the possibility of $k$-independence covering families of size $f(k)n^{\frac{k}{4}-\epsilon}$ for $C_4$-free graphs. Indeed, for a given function $f: \mathbb{N} \to \mathbb{N}$, integer $k$ and $\epsilon > 0$ we choose a prime $q$ such that $k < q$ and $n^{\epsilon} > f(k)(4\sqrt{2}k)^k$. Then \autoref{thm} yields that the size of any $k$-independence covering family for $\Gamma_q$ must be at least $\frac{n^{k/4}}{(4\sqrt{2}k)^k}$, which is strictly more than $f(k)n^{\frac{k}{4}-\epsilon}$.

\section{Upper Bound for Independence Covering on \texorpdfstring{$C_4$}{4-cycle}-free Graphs}\label{ex}
    We now show that $C_4$-free graphs admit an independence-covering lemma which is slightly better than naively iterating over all independent sets of size $k$. Our result follows directly by combining the independence-covering lemma for $d$-degenerate graphs of Loksthanov et al.~\cite{ind-cov-fam} with a classic degeneracy bound for $C_4$-free graphs. We start with the degeneracy bound. The proof closely follows an upper bound on the number of edges in $d$-degenerate graphs attributed to Rieman (see \cite{virt-edge}).
    
    \begin{lemma}\label{lem:degen} Every $C_4$-free graph $G$ is $\lceil\sqrt{n}\rceil$-degenerate. 
    \end{lemma}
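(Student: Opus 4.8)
The goal is to prove that every $C_4$-free graph $G$ on $n$ vertices is $\lceil\sqrt{n}\rceil$-degenerate. Recall that degeneracy at most $d$ means every subgraph contains a vertex of degree at most $d$; equivalently, it suffices to show that every $C_4$-free graph has a vertex of degree at most $\lceil\sqrt{n}\rceil$, and then apply this to every induced subgraph (noting that an induced subgraph of a $C_4$-free graph on $m \le n$ vertices is itself $C_4$-free on $m$ vertices, so it has a vertex of degree at most $\lceil\sqrt{m}\rceil \le \lceil\sqrt{n}\rceil$). So the plan reduces to the single-graph statement: any $C_4$-free graph on $n$ vertices has minimum degree at most $\lceil\sqrt{n}\rceil$.

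The plan is to argue by contradiction via a double-counting of ``cherries'' (paths of length two, i.e. vertices together with an unordered pair of their neighbors). Suppose every vertex has degree at least $\lceil\sqrt{n}\rceil + 1$. I would count pairs $(v, \{x,y\})$ where $x,y$ are two distinct neighbors of $v$; each vertex $v$ contributes $\binom{\deg(v)}{2}$ such pairs, so the total is $\sum_{v} \binom{\deg(v)}{2}$. On the other hand, the $C_4$-freeness (equivalently, that any two distinct vertices share at most one common neighbor) means that each unordered pair $\{x,y\}$ of vertices is the ``endpoints'' of at most one such cherry. Hence the total number of cherries is at most $\binom{n}{2}$. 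Combining these gives
\begin{equation*}
\sum_{v \in V(G)} \binom{\deg(v)}{2} \le \binom{n}{2}.
\end{equation*}

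The next step is to extract the degree bound from this inequality. Since $\binom{t}{2}$ is convex in $t$, by Jensen's inequality (or simply by the assumed lower bound on each degree) the left-hand side is at least $n\binom{\delta}{2}$ where $\delta$ is the minimum degree. If $\delta \ge \lceil\sqrt{n}\rceil + 1 > \sqrt{n}$, then I would estimate $n\binom{\delta}{2} = n \cdot \frac{\delta(\delta-1)}{2}$ and compare it to $\binom{n}{2} = \frac{n(n-1)}{2}$, which forces $\delta(\delta-1) \le n - 1$; but $\delta > \sqrt{n}$ gives $\delta(\delta - 1) \ge \delta \cdot \sqrt n > n$ once the arithmetic with the ceiling is handled carefully, a contradiction. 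The main obstacle — really the only delicate point — is bookkeeping the ceiling and the ``$\pm 1$'' terms so that the strict inequality comes out correctly at the boundary; I expect this to require treating $\delta = \lceil\sqrt{n}\rceil + 1$ and checking that even the marginal case $n$ a perfect square is covered by the slack in $\delta(\delta-1)$ versus $n-1$. Once the contradiction is established, the minimum-degree bound follows, and applying it to every subgraph yields $\lceil\sqrt{n}\rceil$-degeneracy as required.
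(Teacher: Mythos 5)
Your proposal is correct and follows essentially the same argument as the paper: both reduce to showing a $C_4$-free graph has a vertex of degree at most $\lceil\sqrt{n}\rceil$, then double-count paths of length two (your ``cherries'' are exactly the paper's pairs $E_v$ of neighbors), using that $C_4$-freeness forces each pair of vertices to have at most one common neighbor. The only cosmetic difference is that the paper phrases the contradiction via pigeonhole (two sets $E_u, E_v$ sharing a pair yields a $C_4$), whereas you fold $C_4$-freeness into the upper bound $\binom{n}{2}$ and derive a numerical contradiction; your ceiling arithmetic does go through, since $\delta \ge \lceil\sqrt{n}\rceil + 1$ gives $\delta(\delta-1) \ge (\sqrt{n}+1)\sqrt{n} > n-1$.
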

    
    \begin{proof}
    Since every subgraph of a $C_4$-free graph is $C_4$-free it suffices to show that every $C_4$-free graph has a vertex of degree at most $\lceil\sqrt{n}\rceil$. Suppose not, and define for every vertex $v$ in $G$ the set $E_v$ to be the set of unordered pairs $xy$ of distinct vertices in $N(v)$. For every $v\in V(G)$ we have $|E_v| = \binom{|N(v)|}{2} \geq n/2$.
    Since the number of distinct unordered pairs of vertices in $G$ is $n(n-1)/2 < n^2/2 \leq \sum_{v \in V(G)} |E_v|$ the pigeon hole principle yields that there exists a pair $xy$ and vertices $u$, $v$ such that $xy \in E_u \cap E_v$. But then $u,x,v,y$ is a cycle on $4$ vertices in $G$, contradicting that $G$ is $C_4$-free.
    \end{proof}

    We now re-state Lemma 3.2 of Lokshtanov et al.~\cite{ind-cov-fam}. The upper bounds stated here are slightly less sharp than they are in~\cite{ind-cov-fam} because we will only need to apply the lemma with $d = \lceil\sqrt{n}\rceil$.
    
    \begin{lemma}[Lemma 3.2 of~\cite{ind-cov-fam}]\label{lem:degenCover} 
    There is an algorithm that given a $d$-degenerate $n$-vertex graph $G$ and $k \in \mathbb{N}$, runs in time $\binom{k(d+1)}{k}^{1+o(1)}n^{O(1)}$ and outputs a $k$-independence covering family for $G$ of size at most  $\binom{k(d+1)}{k}^{1+o(1)}k(d+1) \log n$.
    \end{lemma}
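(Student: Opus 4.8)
The plan is to exploit the degeneracy ordering of $G$ together with a colour-separation family, arranging things so that the only two sources of multiplicity in the covering family are a small separating family and the number of ways to select $k$ ``active'' colours out of $k(d+1)$.

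First I would fix a degeneracy ordering $v_1 \prec \dots \prec v_n$, which is computable in linear time, and for each vertex $v$ write $B(v) = \{u \in N(v) : u \prec v\}$ for its set of \emph{back-neighbors}, so that $|B(v)| \le d$. For any independent set $I$ with $|I| = k$ the combined back-neighborhood $B(I) = \bigcup_{v \in I} B(v)$ has size at most $kd$, so $I$ together with these back-neighbors occupies at most $k(d+1)$ vertices; crucially, since $I$ is independent, no vertex of $I$ is a back-neighbor of another, and hence $I$ and $B(I)$ are disjoint.

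Next I would reduce the problem to constructing a \emph{separating family}: a family $\mathcal{H}$ of colourings $\gamma : V(G) \to [k(d+1)]$ such that for every independent $I$ of size $k$ some $\gamma \in \mathcal{H}$ is injective on $I$ and satisfies $\gamma(B(I)) \cap \gamma(I) = \varnothing$ (that is, $I$ is rainbow and no back-neighbor of an $I$-vertex reuses a colour of $I$). Given such a $\gamma$ and a $k$-subset $C$ of the colours, I define $S_{\gamma, C}$ by scanning the vertices coloured from $C$ in degeneracy order and greedily adding a vertex exactly when none of its already-added back-neighbors blocks it. This greedy set is independent by construction, and when $\gamma$ separates $I$ and $C = \gamma(I)$ every $v \in I$ has all of $B(v)$ coloured outside $C$, so no back-neighbor of $v$ lies among the scanned vertices and $v$ is necessarily added; thus $I \subseteq S_{\gamma, C}$. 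Taking $\mathcal{F} = \{\, S_{\gamma,C} : \gamma \in \mathcal{H},\ C \in \binom{[k(d+1)]}{k} \,\}$ yields a $k$-independence covering family of size $|\mathcal{H}| \cdot \binom{k(d+1)}{k}$, each member computable by a single linear-time greedy scan.

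The main obstacle is building $\mathcal{H}$ small enough, and this is precisely where the refinement over naive perfect hashing lives: one does \emph{not} need $\gamma$ injective on all of $I \cup B(I)$, which would force a universe of size $k(d+1)$ to be hashed injectively and cost a factor $2^{\Theta(kd)}$; one only needs $I$ rainbow and separated from $B(I)$, while collisions \emph{within} $B(I)$ are harmless. A uniformly random $\gamma$ into $k(d+1)$ colours achieves this with probability at least $(1-\tfrac{1}{d+1})^{k(d+1)} \approx e^{-k}$, and derandomising via the splitter and perfect-hash-family constructions of Naor, Schulman and Srinivasan produces an explicit $\mathcal{H}$ of size $2^{o(kd)} \cdot k(d+1)\log n$ computable within the claimed running time. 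Multiplying by the $\binom{k(d+1)}{k}$ choices of active colours gives the stated bound $\binom{k(d+1)}{k}^{1+o(1)} \, k(d+1)\log n$, with the $\binom{k(d+1)}{k}$ factor coming entirely from the colour selection and the subexponential and $\log n$ factors from $\mathcal{H}$; checking that the separating family really can be trimmed to $2^{o(kd)}\log n$, rather than the cruder $2^{O(kd)}\log n$ that full injectivity would demand, is the crux of the argument.
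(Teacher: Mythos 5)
The paper itself states this lemma without proof, citing Lemma 3.2 of Lokshtanov et al.~\cite{ind-cov-fam}, so the comparison here is against that original argument. Your scaffolding matches it: the degeneracy ordering, the observation that $B(I)$ has size at most $kd$ and is disjoint from $I$, the greedy scan that extracts an independent superset, and the correctness argument for the covering are all sound, and you are also right that demanding injectivity on all of $I \cup B(I)$ must be avoided. The gap is in the quantitative core, and it is fatal for constant $d$. Your two claims about $\mathcal{H}$ are mutually inconsistent: you compute (correctly) that a uniformly random colouring separates a fixed pair $(I, B(I))$ with probability about $e^{-k}$, but then assert that $\mathcal{H}$ can be trimmed to size $2^{o(kd)} \cdot k(d+1)\log n$. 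No derandomization can beat the inverse of the per-colouring success probability, and for $d = O(1)$ we have $2^{o(kd)} = 2^{o(k)} < e^{k}$. Concretely, take $G$ to be a perfect matching $u_1v_1, \ldots, u_{n/2}v_{n/2}$ (so $d=1$, $k(d+1) = 2k$ colours): for $I = \{v_i : i \in T\}$ your requirement forces $\gamma$ to be injective on $I$, and since only $2k$ colours exist, any fixed $\gamma$ is rainbow on at most an $e^{-\Omega(k)}$-fraction of the $k$-sets $T$; hence $|\mathcal{H}| \geq 2^{\Omega(k)}$, not $2^{o(k)}$. Splitters do not rescue this: hashing first into a quadratic universe only replaces $\log n$ by $\mathrm{poly}(k,d)\log n$ and preserves the $e^{k}$ factor.

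The structural reason your factorization cannot reach the stated bound is that it pays twice: for each $\gamma$ that handles $I$, only the single colour set $C = \gamma(I)$ is ever useful, so the factor $\binom{k(d+1)}{k}$ for choosing $C$ sits \emph{on top of} the $e^{\Omega(k)}$ needed for $\mathcal{H}$, giving a family of size at least $\binom{k(d+1)}{k} \cdot 2^{\Omega(k)}$, i.e.\ $\binom{k(d+1)}{k}^{1+c}$ for a constant $c > 0$ when $d$ is constant (for $d=1$, roughly $\binom{2k}{k}^{1.7}$), whereas the lemma claims exponent $1+o(1)$ uniformly in $d$ --- and the small-$d$ regime is precisely the main content of the original lemma. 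The actual proof instead derandomizes the \emph{random-subset} scheme: keep each vertex independently with probability $1/(d+1)$; the probability that a fixed $I$ is kept and all of $B(I)$ is discarded is $(d+1)^{-k}\bigl(d/(d+1)\bigr)^{kd} \approx \binom{k(d+1)}{k}^{-1}$, so the single binomial factor already accounts for everything. This is made deterministic via the $(n; k, kd)$-separating collections (lopsided universal sets) of Fomin et al.: a perfect hash family into a universe of size $(k(d+1))^2$ --- where injectivity on the $k(d+1)$ relevant vertices is a constant-probability event, costing only $\mathrm{poly}(k,d)\log n$ functions --- composed with an explicit separating family of subsets of the small universe of size $\binom{k(d+1)}{k}^{1+o(1)}$. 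Note that in the present paper the lemma is only applied with $d = \lceil\sqrt{n}\rceil$, where $e^{k} = \binom{k(d+1)}{k}^{o(1)}$ and your weaker bound would actually suffice; but as a proof of the lemma as stated, for all $d$, the argument does not go through.
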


    Since $C_4$-free graphs are $\lceil\sqrt{n}\rceil$-degenerate (by \autoref{lem:degen}), \autoref{lem:degenCover} with $d =\lceil\sqrt{n}\rceil$ immediately implies the following independence covering lemma for $C_4$-free graphs.
    
    \begin{lemma}\label{lem:C4freeCover} 
    There is an algorithm that given an $n$-vertex $C_4$-free graph $G$ and $k \in \mathbb{N}$, runs in time $k^{k+O(1)} \cdot n^{k/2 + o(k)}$ and outputs a $k$-independence covering family for $G$ of size at most $k^{k+O(1)} \cdot n^{k/2 + o(k)}$.
    \end{lemma}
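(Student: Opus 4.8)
The plan is to obtain this lemma with no new combinatorial ideas, as an immediate consequence of the degeneracy bound \autoref{lem:degen} together with the degenerate-graph covering construction \autoref{lem:degenCover}; the only work is in bounding the resulting binomial coefficient and collecting lower-order factors.

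First I would invoke \autoref{lem:degen} to conclude that the input $C_4$-free graph $G$ is $d$-degenerate for $d = \lceil\sqrt{n}\rceil$, and then run the algorithm guaranteed by \autoref{lem:degenCover} with this value of $d$. This produces a $k$-independence covering family of size at most $\binom{k(d+1)}{k}^{1+o(1)} k(d+1)\log n$ in time $\binom{k(d+1)}{k}^{1+o(1)} n^{O(1)}$, so it remains only to verify that both quantities are bounded by $k^{k+O(1)}\cdot n^{k/2+o(k)}$.

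The central estimate is the bound on the binomial coefficient. Using $\binom{m}{k}\le (em/k)^k$ with $m = k(d+1)$ gives $\binom{k(d+1)}{k} \le (e(d+1))^k$, and since $d+1 = \lceil\sqrt n\rceil + 1 \le 2\sqrt n$ for all sufficiently large $n$, this yields $\binom{k(d+1)}{k} \le (2e\sqrt n)^k = (2e)^k n^{k/2}$. Raising to the power $1+o(1)$ turns the $n^{k/2}$ into $n^{k/2+o(k)}$ while keeping the constant-base factor at $2^{O(k)}$, which is comfortably below $k^{k+O(1)}$.

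The last step is bookkeeping: I would absorb the remaining multiplicative factors into the claimed bound. The term $k(d+1)\log n$ contributes an extra $O(k\sqrt n\log n) = k\cdot n^{1/2+o(1)}$, whose power of $n$ is a constant swallowed by the $o(k)$ in the exponent and whose leftover factor of $k$ is swallowed by $k^{k+O(1)}$; the $n^{O(1)}$ in the running time is handled identically. The one point requiring a little care — and the step I would flag as the main (minor) obstacle — is checking that all of these lower-order contributions to the exponent of $n$ (the $\tfrac{k}{2}o(1)$ from the $1+o(1)$ power, the additive $\tfrac12$ from the extra $\sqrt n$, and the $\log n = n^{o(1)}$ factors) genuinely fit inside a single $o(k)$ term, so that both the size and the running time collapse cleanly to $k^{k+O(1)}\cdot n^{k/2+o(k)}$.
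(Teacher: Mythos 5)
Your proposal is correct and is exactly the paper's approach: the paper derives \autoref{lem:C4freeCover} in one sentence by combining \autoref{lem:degen} with \autoref{lem:degenCover} at $d=\lceil\sqrt{n}\rceil$, leaving the arithmetic implicit. Your estimate $\binom{k(d+1)}{k}\le (e(d+1))^k\le (2e)^k n^{k/2}$ and the absorption of the $k(d+1)\log n$ and $n^{O(1)}$ factors simply fill in the bookkeeping the paper omits.
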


\section{Conclusion}
   We showed that $C_4$-free graphs do not admit $k$-independence covering families of size $f(k)n^{k\frac{1-\epsilon}{4}}$ for any function $f$. We remark that this lower bound holds not only against independence covering families, but also against randomized independence covering lemmas on the form of Lemma 1 of Lokshtanov et al.~\cite{ind-cov-fam}. In particular, there cannot exist a probability distribution on the independent sets of $\Gamma_q$ such that for every independent set $Z$ of size at most $k$ the probability that the sampled independent set $I$ contains $Z$ is at least $(f(k)n^{k\frac{1-\epsilon}{4}})^{-1}$. Indeed, sampling $(f(k)n^{k\frac{1-\epsilon}{4}}) \cdot O(k \log n)$ independent sets from such a distribution would produce with non-zero probability a $k$-independence covering family for $\Gamma_q$, contradicting \autoref{thm}.

    Our work leaves a gap between the $n^{k/4}$ lower bound and the $n^{k/2}$ upper bound for the size of $k$-independence covering families of $C_4$-free graphs, and closing this gap might be interesting. Perhaps more interesting is to obtain a more complete understanding of which graph families aside from $d$-degenerate and nowhere dense admit $k$-independence covering families of size $f(k)n^{O(1)}$.

\newpage
\bibliographystyle{plainurl}
\bibliography{refs}

\begin{thebibliography}{1}

\bibitem{AgrawalHM22}
Akanksha Agrawal, Soumita Hait, and Amer~E. Mouawad.
\newblock On finding short reconfiguration sequences between independent sets.
\newblock In Sang~Won Bae and Heejin Park, editors, {\em 33rd International
  Symposium on Algorithms and Computation, {ISAAC} 2022, December 19-21, 2022,
  Seoul, Korea}, volume 248 of {\em LIPIcs}, pages 39:1--39:14, 2022.

\bibitem{AgrawalJKS20}
Akanksha Agrawal, Pallavi Jain, Lawqueen Kanesh, and Saket Saurabh.
\newblock Parameterized complexity of conflict-free matchings and paths.
\newblock {\em Algorithmica}, 82(7):1939--1965, 2020.

\bibitem{proj-plane}
John~C. Baez.
\newblock The octonions.
\newblock {\em Bulletin of the American Mathematical Society}, 39(2):145--205,
  2001.

\bibitem{FellowsGMMRRSS23}
Michael~R. Fellows, Mario Grobler, Nicole Megow, Amer~E. Mouawad,
  Vijayaragunathan Ramamoorthi, Frances~A. Rosamond, Daniel Schmand, and
  Sebastian Siebertz.
\newblock On solution discovery via reconfiguration.
\newblock {\em CoRR}, abs/2304.14295, 2023.

\bibitem{JacobMR21}
Ashwin Jacob, Diptapriyo Majumdar, and Venkatesh Raman.
\newblock Parameterized complexity of conflict-free set cover.
\newblock {\em Theory Comput. Syst.}, 65(3):515--540, 2021.

\bibitem{JainKM20}
Pallavi Jain, Lawqueen Kanesh, and Pranabendu Misra.
\newblock Conflict free version of covering problems on graphs: Classical and
  parameterized.
\newblock {\em Theory Comput. Syst.}, 64(6):1067--1093, 2020.

\bibitem{virt-edge}
Stasys Jukna.
\newblock {\em Extremal Combinatorics}.
\newblock Springer Berlin, Heidelberg, 2011.

\bibitem{ind-cov-fam}
Daniel Lokshtanov, Fahad Panolan, Saket Saurabh, Roohani Sharma, and Meirav
  Zehavi.
\newblock Covering small independent sets and separators with applications to
  parameterized algorithms.
\newblock {\em ACM Trans. Algorithms}, 16(3), 2020.

\bibitem{vertex-iso}
Andrew~Elvey Price, Muhammad~Adib Surani, and Sanming Zhou.
\newblock The isoperimetric number of the incidence graph of {P}{G}(n,q).
\newblock {\em The Electronic Journal of Combinatorics}, 2018.

\end{thebibliography}

\end{document}